\documentclass[pdflatex,sn-mathphys-num]{sn-jnl}

\usepackage{float} 
\usepackage{graphicx}
\usepackage{multirow}%
\usepackage{amsmath,amssymb,amsfonts}%
\usepackage{amsthm}%
\usepackage{caption}
\usepackage{mathrsfs}%
\usepackage[title]{appendix}%
\usepackage{xcolor}%
\usepackage{textcomp}%
\usepackage{manyfoot}%
\usepackage{booktabs}%
\usepackage{algorithm}%
\usepackage{algorithmicx}%
\usepackage{algpseudocode}%
\usepackage{listings}%
\usepackage{tikz}%
\usetikzlibrary{arrows.meta}%

\theoremstyle{thmstyleone}%
\newtheorem{theorem}{Theorem}
\newtheorem{corollary}{Corollary}

\theoremstyle{thmstyletwo}%
\newtheorem{remark}{Remark}%

\theoremstyle{thmstylethree}%
\newtheorem{definition}{Definition}%

\begin{document}

\title[An Epidemic Threshold Set for Networks]{An Epidemic Threshold Set for Networks}


\author*[1]{\fnm{Hoang Phi Dung}}\email{dunghp@ptit.edu.vn}

\author[2]{\fnm{Nguyen Hong Phuc}}\email{hongphuc692k7@gmail.com}
\equalcont{These authors contributed equally to this work.}

\equalcont{These authors contributed equally to this work.}

\affil*[1]{\orgdiv{Department of Mathematics - Faculty of Fundamental Sciences}, \orgname{Posts and Telecommunications Institute of Technology}, \orgaddress{\street{96A Tran Phu}, \city{Ha Dong}, \postcode{12100}, \state{Hanoi}, \country{Vietnam}}}

\affil[2]{\orgdiv{Faculty of Information Technology}, \orgname{Posts and Telecommunications Institute of Technology}, \orgaddress{\street{96A Tran Phu}, \city{Ha Dong}, \postcode{12100}, \state{Hanoi}, \country{Vietnam}}}



\abstract{
In this paper, we investigate a discrete-time SIS epidemic model and the epidemic thresholds on complex networks. We focus on proposing a community-level epidemic threshold set and establishing a comparative result between the local epidemic thresholds and the global epidemic threshold. To verify our theoretical findings and structural properties, we conduct numerical experiments on one synthetic network (Network1) and one real-world network (the Haslemere contact network). Our numerical simulations, along with the computation and statistical ranking of the epidemic threshold sets, align accurately with our theoretical results.
}

\keywords{
complex network, SIS epidemic model, global epidemic threshold, local epidemic threshold, aggregate infection quantity, epidemic threshold set
}
\maketitle
\section{Introduction}
In the modern era, complex networks—such as online social networks, computer systems, and the Internet of Things (IoT)—have transformed global communication. However, this high connectivity also accelerates the rapid diffusion of unwanted spreading processes, including misinformation, computer viruses, and epidemic pathogens \cite{Pham2019, Acemoglu2013, Chen2014, Cho2025}. Following foundational breakthroughs in complex network theory \cite{Watts1998, Barabasi1999}, modeling and analyzing these propagation dynamics has emerged as a vital research area with profound implications for information security, public health, and social sciences \cite{Pastor-Satorras2015, Nowzari2016, Chen2014}.

Epidemic modeling on networks is largely rooted in classical biological frameworks, such as the Susceptible-Infected-Recovered (SIR) and Susceptible-Infected-Susceptible (SIS) models \cite{Ahn2013, Chen2015, Pare2020, Khanh2026}. Among these, the discrete-time SIS model on networks has been widely studied. A fundamental question in this domain is determining the conditions under which an epidemic persists or dies out—a transition point mathematically defined as the global epidemic threshold \cite{Wang2003, Chakrabarti2008, Mieghem2009}.

However, real-world networks are rarely homogeneous. Instead, they typically exhibit distinct community structures—dense clusters of internal connections paired with sparse links between different communities \cite{Girvan2002, Newman2006, Fortunato2010}. This structural heterogeneity causes local spreading dynamics to diverge significantly across the network \cite{Stegehuis2016, Salathe2010}.

In this paper, we consider a discrete-time SIS spreading model analogous to those investigated in \cite{Wang2003, Chakrabarti2008}, which exhibits distinct structural differences from the SIS variations discussed in \cite{Ahn2013, Gracy2021, Pare2020, Dung2026}. 

\vspace{1em}
\noindent\textit{Contributions}

A major limitation of existing epidemic threshold studies \cite{Wang2003, Chakrabarti2008, Mieghem2009, Mieghem2012, Cantwell2026, Ruhi2016} is their neglect of community structures, as a single global epidemic threshold $\tau = \frac{1}{\lambda_1(A)}$ cannot capture local vulnerabilities. To address this, we introduce a community-level epidemic threshold set \eqref{thresholdprofile} to quantify and compare local epidemic thresholds among all individual communities. Mathematically, we prove Theorem \ref{1vsall}, establishing that any local epidemic threshold is bounded below by the global epidemic threshold. We then validate our theoretical findings and Corollaries \ref{hequa1}--\ref{hequa2} through numerical simulations. Ultimately, this localization provides a community-oriented perspective for analyzing propagation. Future research could integrate this approach with tighter threshold bounds—such as those proposed by Van Mieghem et al. \cite{Mieghem2012} or Ruhi et al. \cite{Ruhi2016}—to further improve targeted mitigation strategies.

\vspace{1em}
\noindent\textit{Outline}

The remainder of this paper is organized as follows. Section \ref{sec2} describes the investigated epidemic model and formalizes the definition of the epidemic threshold. In Section \ref{sec3}, we propose the community-level epidemic threshold set and establish a comparative theorem regarding the relationship between local epidemic thresholds and the global epidemic threshold. Subsequently, in Section \ref{sec4}, we compute the epidemic threshold sets and measure the aggregate infection quantity of each community via numerical simulations on real and synthetic datasets. Finally, we conclude the paper and discuss future research directions in Section \ref{conclude}.

\section{Preliminaries} \label{sec2}

\subsection{The Discrete-Time SIS Model}
Consider an undirected, connected network $G=(V,E)$ with $|V|=n$. Let $A=(A_{ij})_{n\times n}$ be the adjacency matrix of $G$, where:

$$
A_{ij}=
\begin{cases}
1, & \text{if there is an edge between node } i \text{ and node } j,\\
0, & \text{otherwise.}
\end{cases}
$$

In the SIS model, each node exists in exactly one of two discrete states at any given time step: Susceptible ($S$) or Infected ($I$). During a single time step, the disease transmission probability across any edge is $\beta$; that is, a susceptible node $S$ can be infected by each of its infectious neighbors $I$ with probability $\beta$. Simultaneously, an infected node $I$ recovers and returns to the susceptible state $S$ with probability $\delta$ per time step. Note that we assume $\beta$ and $\delta$ are homogeneous across the entire network $G$. Henceforth, we uniformly refer to $\beta$ and $\delta$ as the transmission rate and recovery rate, respectively. This epidemic propagation process can be rigorously modeled as a Markov chain with $2^n$ states. Wang et al. \cite{Wang2003} and Chakrabarti et al. \cite{Chakrabarti2008} investigated a discrete-time SIS Model detailed below.

Let $p_i(t)$ denote the probability that node $i$ is infected at time step $t$. Let $N(i)$ be the set of neighbors of node $i$, i.e., $N(i)=\{j\in V:A_{ij}=1\}$. The probability that node $i$ is not infected by any of its neighbors at time $t$ is a random variable given by:
\begin{align}\label{neighborhood-infected}
\zeta_i(t)=\prod\limits_{j\in N(i)}
\left(1-\beta p_j(t-1)\right) =\prod_{j=1}^n
\left(1-\beta A_{ij}p_j(t-1)\right).
\end{align}
We assume that probabilities $p_j(t-1)$ are
independent of each other.

Using Equation \eqref{neighborhood-infected}, we obtain the discrete-time SIS propagation model that we investigate throughout this paper (\cite[Equation (7)]{Chakrabarti2008}):
\begin{align}\label{main-model} 
    p_i(t+1) = 1 - \left[(1-p_i(t)) + \delta p_i(t)\right]\prod_{j=1}^n
\left(1-\beta A_{ij}p_j(t)\right).
\end{align}
\begin{figure}[h]
\centering
\resizebox{0.78\linewidth}{!}{%
\begin{tikzpicture}[
    state/.style={
        circle,
        draw,
        line width=0.8pt,
        minimum size=1.8cm,
        inner sep=0pt,
        font=\normalsize,
        align=center
    },
    trans/.style={
        ->,
        >=stealth,
        line width=0.9pt
    },
    lab/.style={
        font=\small
    }
]

\node[state] (S) at (0,0) {Susceptible};
\node[state] (I) at (5.2,0) {Infective};

\draw[trans]
    (S.25) to[out=20,in=160]
    node[midway, above=0.38cm, lab] {Infected by neighbor}
    node[midway, above=0.05cm, lab] {$1-\zeta_i(t)$}
    (I.155);

\draw[trans]
    (I.205) to[out=200,in=-20]
    node[midway, below=0.02cm, lab] {$\delta$}
    node[midway, below=0.35cm, lab] {Cured}
    (S.335);

\draw[trans]
    (S.140) to[out=145,in=215,looseness=7]
    node[pos=0.32, left=0.15cm, lab] {$\zeta_i(t)$}
    (S.220);

\node[lab] at (-1.55,1.35) {Resisted infection};

\draw[trans]
    (I.40) to[out=35,in=-35,looseness=7]
    node[pos=0.45, right=0.18cm, lab] {$1-\delta$}
    (I.320);
\node[lab] at (6.65,1.35) {Not cured};
\end{tikzpicture}%
}
    \caption{State transition diagram of the SIS model in \cite{Chakrabarti2008}.}
\label{fig:sis_transition_diagram}
\end{figure}

\begin{remark} 
\begin{enumerate}
    \item[(i)] The discrete-time SIS model in \cite[Equation (7)]{Chakrabarti2008} slightly modifies the earlier formulation in \cite[Equation (6)]{Wang2003}. In \cite{Chakrabarti2008}, the authors neglected the higher-order term $\dfrac{1}{2}\delta p_i(t)\left(1 - \zeta_i(t+1)\right)$. This neglect is justified because the model in \cite{Wang2003} can be asymptotically approximated to yield Equation (7) in \cite{Chakrabarti2008}, which corresponds precisely to Model \eqref{main-model} investigated herein.
    \item[(ii)] In the investigated SIS model, the recurrence relation is non-linear because the probability that a node resists infection from its neighbors is expressed as a product. Consequently, this system is governed by a \textit{Nonlinear Dynamical System} (NLDS). 
\end{enumerate}
\end{remark}
\subsection{Epidemic Threshold of a Propagation Model with Homogeneous Transmission and Recovery Rates}
We recall the formal definition of the epidemic threshold on a network as formulated by Chakrabarti et al. \cite{Chakrabarti2008} (see also \cite{Wang2003}).
\begin{definition}[Epidemic Threshold]\label{threshold} A positive real value $\tau$ is defined as the epidemic threshold for NLDS on a network if it satisfies the following two conditions:
\begin{enumerate}
    \item[(i)] If $\dfrac{\beta}{\delta}<\tau$, the epidemic dies out exponentially over time.
    \item[(ii)] If $\dfrac{\beta}{\delta}>\tau$, the epidemic persists and can break out across the network.
\end{enumerate}
Here, the ratio $\frac{\beta}{\delta}$ is termed the \textit{effective spreading rate} (\cite[Equation (5.1)]{Chen2015}).
\end{definition}
Thus, when a network's effective spreading rate exceeds its epidemic threshold, a disease outbreak occurs. Conversely, when the effective spreading rate falls below the threshold, the epidemic dies out over time.

In \cite{Wang2003} and \cite{Chakrabarti2008}, the authors defined an explicit epidemic threshold directly determined by the network's adjacency matrix $A$.

\begin{definition}[\cite{Chakrabarti2008}]\label{eigen}
The value $\tau=\frac{1}{\lambda_1(A)}$ is defined as the epidemic threshold of the network $G$, where $\lambda_1(A)$ denotes the largest eigenvalue (spectral radius) of the adjacency matrix $A$. 
\end{definition}
Following the calculations in \cite[Appendix A]{Chakrabarti2008}, the condition for the disease-free state, where $p_i(t) \to 0$ for all $i \in V$ as $t \to \infty$, is $\frac{\beta}{\delta} < \frac{1}{\lambda_1(A)}$.

\section{Community-Level Epidemic Threshold Set}\label{sec3}
\subsection{Epidemic Threshold Set with Homogeneous Transmission and Recovery Rates}
Given a network with $K$ disjoint, connected communities, we formulate the SIS model \eqref{main-model} on the induced subgraph of each community $C_g$ independently. This setup temporarily disregards inter-community edges while keeping the transmission ($\beta$) and recovery ($\delta$) rates unchanged.

Let $A_g$ be the adjacency matrix of $C_g$. Following Definition \ref{eigen}, the local epidemic threshold of community $C_g$ is defined as:
$$\tau_g = \frac{1}{\lambda_1(A_g)}, \quad g = 1, \dots, K$$
Hereafter, $\tau$ and $\tau_g$ denote the global and local epidemic thresholds, respectively.

\begin{definition}[Epidemic Threshold Set]\label{thresholdprofile}
The ordered set of positive real numbers
\[\mathcal{T}_A = (\tau_1,\tau_2,\ldots,\tau_{K})\]
is defined as the \textit{epidemic threshold set} of the network $G$, where $\tau_g =\frac{1}{\lambda_1(A_g)}$ denotes the local epidemic threshold associated with community $C_g$, for $g = 1, \dots, K$.
\end{definition}
We can rank the communities in ascending order of their local epidemic thresholds (re-indexing if necessary) such that:
\[\tau_{1}\leq \tau_{2}\leq \cdots \leq \tau_{{K}}.\]

\subsection{Comparison and Evaluation}
\vspace{0.2cm}

After establishing the community-based threshold set, we compare the local epidemic threshold of an individual community with the global epidemic threshold.

By Definition \ref{eigen}, the global epidemic threshold is given by $\tau=\frac{1}{\lambda_1(A)}$. The following theorem establishes a fundamental inequality connecting the global epidemic threshold to the local epidemic thresholds.

\begin{theorem}\label{1vsall}
The local epidemic threshold of any community is greater than or equal to the global epidemic threshold. That is, for every community $C_g$, we have:
\[\tau_g\geq \tau.\]
Furthermore, it holds that $\tau \le \tau_1 \le \dots \le \tau_K$.
\end{theorem}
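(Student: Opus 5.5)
The claim $\tau_g\ge\tau$ is equivalent to $\lambda_1(A_g)\le\lambda_1(A)$, since both spectral radii are positive. The plan is to prove this spectral monotonicity for induced subgraphs. One small point first: if a community were a single isolated vertex, $\lambda_1(A_g)=0$ and $\tau_g$ would be undefined. We therefore assume, as Definition \ref{thresholdprofile} implicitly does, that each $C_g$ has at least one edge; otherwise we set $\tau_g=+\infty$ and the inequality is trivial.

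The main argument uses the Rayleigh quotient. Because $A$ and $A_g$ are real symmetric, $\lambda_1(A)=\max_{\|x\|=1}x^{\top}Ax$, and the same holds for $A_g$. By Perron--Frobenius, since $A_g$ is nonnegative and $C_g$ is connected, there is a unit eigenvector $y\in\mathbb{R}^{|C_g|}$ with $A_gy=\lambda_1(A_g)y$ and $y\ge0$. Extend $y$ by zeros to a vector $x\in\mathbb{R}^n$ supported on $C_g$. Then $\|x\|=1$, and because $A_g$ is exactly the principal submatrix of $A$ indexed by $C_g$, we get $x^{\top}Ax=y^{\top}A_gy=\lambda_1(A_g)$. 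Hence
\[\lambda_1(A)\ \ge\ x^{\top}Ax\ =\ \lambda_1(A_g).\]
Here Perron--Frobenius is used only to identify the largest eigenvalue with the spectral radius, as in Definition \ref{eigen}. The same bound also follows from Cauchy interlacing for principal submatrices, or from entrywise monotonicity of the spectral radius for nonnegative matrices ($0\le A_g\oplus 0\le A$ implies $\rho(A_g)\le\rho(A)$). Taking reciprocals gives $\tau_g=1/\lambda_1(A_g)\ge 1/\lambda_1(A)=\tau$ for every $g$.

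The chain $\tau\le\tau_1\le\dots\le\tau_K$ then follows at once. After the re-indexing described just before the theorem, the local thresholds are in ascending order, and applying the first part with $g=1$ gives $\tau\le\tau_1$.

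No step is a genuine obstacle. The only point needing care is to justify that $A_g$ is a principal submatrix of $A$: $C_g$ is an induced subgraph, so its adjacency matrix consists precisely of the entries $A_{ij}$ with $i,j\in C_g$. Dropping the inter-community edges corresponds to discarding off-block entries, which never increases the quadratic form on vectors supported in $C_g$. A strict inequality $\tau_g>\tau$ could also be obtained when $C_g\ne V$, using the connectivity of $G$ and the strict Perron--Frobenius comparison. The theorem does not require this, so I would mention it only as a remark.
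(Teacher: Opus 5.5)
Your proof is correct and follows the paper's argument. Like you, the paper zero-pads the top eigenvector of $A_g$, evaluates the Rayleigh quotient of $A$ at it (after a permutation similarity $P^{T}AP$ that makes $C_g$ a contiguous block, which amounts to your direct principal-submatrix observation), and takes reciprocals, while your handling of an edgeless community and your explicit derivation of $\tau\le\tau_1\le\dots\le\tau_K$ are small points the paper leaves implicit.
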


\begin{proof}
Let $P$ be a permutation matrix that groups all nodes belonging to community $C_g$ into a single contiguous block within the matrix $A$. We can express the transformed matrix as:
\[\hat{A}=
P^T A P
=
\begin{pmatrix}
A_g & R\\
R^T & A_{\bar g}
\end{pmatrix},
\]
where $A_g$ is the adjacency matrix of the subgraph induced by community $C_g$, $A_{\bar g}$ is the adjacency matrix of the remainder of the network, and $R$ represents the bipartite cut (inter-community edges) connecting $C_g$ to $V\setminus C_g$.

Since $P$ is an orthogonal permutation matrix ($P^T=P^{-1}$), $A$ and $\hat{A}$ are similar matrices. Consequently, they share the exact same set of eigenvalues (spectrum). 

Let $u_g\in \mathbb{R}^{|C_g|}$ be the eigenvector corresponding to the largest eigenvalue $\lambda_1(A_g)$ of $A_g$, meaning $A_g u_g = \lambda_1(A_g)u_g$. We construct an expanded vector in $\mathbb{R}^n$ by zero-padding the remaining entries:
$
\tilde u_g
=
\begin{pmatrix}
u_g\\
0
\end{pmatrix}.
$
According to the Rayleigh quotient theorem for real symmetric matrices (see \cite{Horn2012}), the largest eigenvalue of $\hat{A}$ satisfies:
\[
\lambda_1(\hat{A})
=
\max_{x\neq 0}
\frac{x^T \hat{A} x}{x^T x}.\]

Since $\tilde u_g\neq 0$, it follows directly that $
\lambda_1(\hat{A}) \geq
\frac{\tilde u_g^T \hat{A}\tilde u_g}{\tilde u_g^T\tilde u_g}.$ Then, we have:
\[
\tilde u_g^T \hat{A}\tilde u_g
=
\begin{pmatrix}
u_g^T & 0
\end{pmatrix}
\begin{pmatrix}
A_g & R\\
R^T & A_{\bar g}
\end{pmatrix}
\begin{pmatrix}
u_g\\
0
\end{pmatrix}
=
u_g^T A_g u_g.
\]

Simultaneously, we have $
\tilde u_g^T\tilde u_g = u_g^T u_g.$
Therefore, we obtain:
\[
\frac{\tilde u_g^T \hat{A}\tilde u_g}{\tilde u_g^T\tilde u_g}
=
\frac{u_g^T A_g u_g}{u_g^T u_g}
=
\lambda_1(A_g).
\]

It follows that
$\lambda_1(\hat{A})\geq\lambda_1(A_g)$. Since $A$ and $\hat{A}$ have the same spectrum,
$\lambda_1(A)=\lambda_1(\hat{A})$, and hence
$\lambda_1(A)\geq\lambda_1(A_g)$. Because both eigenvalues are positive,
$\frac{1}{\lambda_1(A_g)}\geq\frac{1}{\lambda_1(A)}$.
Therefore, $\tau_g\geq\tau$, which completes the proof.
\end{proof}
\begin{corollary} \label{hequa1}
If an epidemic cannot break out across the entire network, it cannot break out independently within any individual community.
\end{corollary}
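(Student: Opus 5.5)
The statement to be proved is Corollary \ref{hequa1}, which I read as a direct consequence of Theorem \ref{1vsall} combined with the threshold interpretation in Definitions \ref{threshold} and \ref{eigen}. First I will fix what ``cannot break out across the entire network'' means: the effective spreading rate satisfies $\frac{\beta}{\delta} < \tau = \frac{1}{\lambda_1(A)}$. This is the disease-free condition recalled after Definition \ref{eigen}, following \cite[Appendix A]{Chakrabarti2008}, under which $p_i(t)\to 0$ for all $i\in V$.

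Next, for an arbitrary community $C_g$, I will consider Model \eqref{main-model} posed on the induced subgraph with adjacency matrix $A_g$ and the same $\beta,\delta$, exactly as in the construction of the epidemic threshold set (Definition \ref{thresholdprofile}). By Theorem \ref{1vsall}, $\tau \le \tau_g$. Chaining the inequalities gives
\[
\frac{\beta}{\delta} < \tau \le \tau_g = \frac{1}{\lambda_1(A_g)} .
\]
The same linearization argument as in \cite[Appendix A]{Chakrabarti2008}, applied verbatim to the smaller system on $C_g$, then shows that the isolated community epidemic dies out, i.e. $p_i(t)\to 0$ for every $i\in C_g$. This argument uses only that $A_g$ is a symmetric nonnegative adjacency matrix, so it transfers without change.

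The one point needing care is the strictness of the inequalities. Since the hypothesis is the strict inequality $\beta/\delta<\tau$, the conclusion $\beta/\delta<\tau_g$ remains strict even when $\tau_g=\tau$. The argument therefore holds without any extra assumption.

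One may also want to say that the community cannot outbreak even while embedded in $G$. For this I would add a monotonicity remark: deleting the inter-community block $R$ only removes factors $(1-\beta A_{ij}p_j)\le 1$ from the product in \eqref{main-model}. This is not required for the corollary as stated, which concerns communities treated independently. By contrapositive, the corollary is equivalent to saying that an outbreak within some community ($\beta/\delta>\tau_g$) forces $\beta/\delta>\tau$, which again follows from Theorem \ref{1vsall}.
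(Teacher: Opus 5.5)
Your argument is correct and is essentially the paper's proof. You chain $\beta/\delta<\tau\le\tau_g$ via Theorem~\ref{1vsall} and conclude die-out in every community; the only difference is that you spell out why the Chakrabarti die-out argument applies to $A_g$, where the paper simply appeals to Definition~\ref{threshold}.

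One small point on your optional aside, which runs the opposite way to what you say. Removing the block $R$ makes the isolated community's infection probabilities a lower bound for the embedded ones, so it cannot show the embedded community dies out; that claim is immediate anyway from $p_i(t)\to 0$ on all of $V$. What the monotonicity actually yields is an alternative derivation of the corollary that bypasses Theorem~\ref{1vsall}: global die-out forces isolated die-out.
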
 

\begin{proof}
By Theorem \ref{1vsall}, for any community $C_g$, we have $\tau_g\geq \tau$. If the effective spreading rate is below the global epidemic threshold, then $\frac{\beta}{\delta}<\tau$. Consequently, for every community $C_g$, the inequality $\frac{\beta}{\delta}<\tau\leq \tau_g$ holds. Thus $
\frac{\beta}{\delta}<\tau_g, \forall g=1,2,\ldots, K.$
According to Definition \ref{threshold}, since the effective spreading rate is below the local epidemic threshold of every community, the epidemic cannot break out in any community.
\end{proof}

\begin{corollary} \label{hequa2}
If an epidemic is able to break out within any community, it will inevitably trigger an outbreak across the entire global network.
\end{corollary}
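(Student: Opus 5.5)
The plan is to argue by contraposition from Corollary \ref{hequa1}, or equivalently directly from Theorem \ref{1vsall}. Suppose the epidemic can break out within some community $C_g$. By Definition \ref{threshold} applied to the SIS model \eqref{main-model} on the induced subgraph of $C_g$, with local threshold $\tau_g = 1/\lambda_1(A_g)$ as in Definition \ref{thresholdprofile}, this means the effective spreading rate satisfies $\frac{\beta}{\delta} > \tau_g$. Here I read ``able to break out'' as the strict supercritical condition (ii), so the boundary case $\frac{\beta}{\delta}=\tau_g$ is excluded.

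Next I will invoke Theorem \ref{1vsall}, which gives $\tau_g \geq \tau = 1/\lambda_1(A)$. Chaining the inequalities yields $\frac{\beta}{\delta} > \tau_g \geq \tau$, hence $\frac{\beta}{\delta} > \tau$. Because $\beta$ and $\delta$ are homogeneous and unchanged between the local and global models, the same ratio is being compared in both places. By part (ii) of Definition \ref{threshold} applied to the whole network $G$, the epidemic then persists and breaks out globally, which is the claim.

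Alternatively, I can phrase this as the logical contrapositive of Corollary \ref{hequa1}. That corollary says that no global outbreak implies no local outbreak in any community. Its contrapositive is that a local outbreak in some community implies a global outbreak. This rests on identifying ``cannot break out'' with $\frac{\beta}{\delta}<\tau$, and that identification is where the care is needed.

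The only genuine subtlety, and the main obstacle, is that Definition \ref{threshold} is a dichotomy with a gap at equality: it says nothing when $\frac{\beta}{\delta} = \tau$. The direct chain above avoids this gap because the strict inequality $\frac{\beta}{\delta} > \tau_g$ propagates to $\frac{\beta}{\delta} > \tau$. I will therefore present the direct argument as the proof and mention the contrapositive only as a remark.
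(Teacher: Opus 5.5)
Your proposal is correct and is the paper's proof. The paper likewise assumes $\frac{\beta}{\delta}>\tau_g$, chains this with $\tau_g\geq\tau$ from Theorem~\ref{1vsall} to get $\frac{\beta}{\delta}>\tau$, and concludes by part (ii) of Definition~\ref{threshold}; your caution about the equality gap in the contrapositive route is a valid extra point that the paper does not raise.
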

\begin{proof}
Suppose there exists a community $C_g$ whose effective spreading rate exceeds its local epidemic threshold, i.e., $\frac{\beta}{\delta}>\tau_g$. By Theorem \ref{1vsall}, we know that $\tau_g\geq\tau$. It immediately follows that $\frac{\beta}{\delta}>\tau_g\geq\tau$, yielding $\frac{\beta}{\delta}>\tau$. According to Definition \ref{threshold}, since the effective spreading rate exceeds the global epidemic threshold, a global epidemic outbreak is guaranteed.
\end{proof}
\section{Numerical Experiments}\label{sec4}
\subsection{Datasets}

We evaluate the proposed epidemic threshold sets on two datasets: one synthetic network (\textbf{Network1}) and one real-world contact network (\textbf{Haslemere}).

For the first dataset, we randomly select the number of communities $K$ uniformly in the range $[8, 12]$, and set the community sizes $|C_g|$ between $45$ and $90$ nodes. For each community, the internal connection probability $p_{\mathrm{in}}$ is chosen randomly from $[0.08, 0.16]$, while the inter-community connection probability $p_{\mathrm{out}}$ between different modules is drawn from $[0.002, 0.012]$. The resulting probability matrix is symmetrized to generate an undirected graph. We designate this synthetic network as \textbf{Network1}.

The second dataset is the \textbf{Haslemere} contact network \cite{Firth2020, Klepac2018}. Here, nodes represent residents of Haslemere, and edges represent interpersonal physical contacts captured via wearable sensors. The raw graph contains $469$ nodes and $1,262$ edges, with a maximum degree of $37$. Because the raw data contains severe high-degree outliers, a preprocessed version was generated by pruning specific edges attached to high-degree hubs, removing isolated nodes, and ensuring that both the overall network and each individual community remain connected. The preprocessed network consists of $369$ nodes and $808$ edges, with a capped maximum degree of $8$.

\subsection{Computation of Community Epidemic Threshold Sets and Comparison with the Global Epidemic Threshold}
In our first experiment, we construct the community-level epidemic threshold set for each network. We evaluate each community $C_g$ and its corresponding induced sub-adjacency matrix $A_g$ independently. The global epidemic threshold of the entire graph is computed via $\tau=\frac{1}{\lambda_1(A)}$, where $A$ is the full adjacency matrix of network $G$. 

For each community $C_g$, we compute and report the following metrics:
\begin{itemize} 
\item $|C_g|$: The number of nodes in community $C_g$. 
\item $|E_g|$: The number of internal edges in community $C_g$ (edges where both endpoints reside within $C_g$). 
\item $\tau_g$: The local epidemic threshold of community $C_g$, calculated as $\tau_g=\frac{1}{\lambda_1(A_g)}$.
\item $\tau_g-\tau$: The spectral gap between the local epidemic threshold of community $C_g$ and the global epidemic threshold.
\item $\mathrm{Rank}$: The ordinal rank of the community sorted in ascending order of its local epidemic threshold $\tau_g$.
\end{itemize}

The statistical results and calculated metrics of two datasets are summarized in the tables below.
\newpage
\begin{table}[t!]
\centering
\scriptsize
\renewcommand{\arraystretch}{1.15}

\begin{minipage}[t]{0.49\textwidth}
\centering
\resizebox{\linewidth}{!}{%
\begin{tabular}{|c|c|c|c|c|c|}
\hline
$\mathrm{Rank}$ & $C_g$ & $|C_g|$ & $|E_g|$ & $\tau_g$ & $\tau_g-\tau$ \\
\hline
1  & $C_{0}$ & 64 & 321 & 0.0935 & 0.0155 \\
2  & $C_{1}$ & 74 & 324 & 0.0991 & 0.0211 \\
3  & $C_{7}$ & 88 & 343 & 0.1108 & 0.0327 \\
4  & $C_{5}$ & 54 & 181 & 0.1198 & 0.0418 \\
5  & $C_{6}$ & 58 & 199 & 0.1203 & 0.0423 \\
6  & $C_{3}$ & 81 & 277 & 0.1244 & 0.0463 \\
7  & $C_{4}$ & 61 & 217 & 0.1283 & 0.0502 \\
8  & $C_{8}$ & 57 & 167 & 0.1400 & 0.0619 \\
9  & $C_{9}$ & 85 & 248 & 0.1432 & 0.0652 \\
10 & $C_{2}$ & 49 & 138 & 0.1470 & 0.0689 \\
\hline
\end{tabular}%
}
\vspace{0.25cm} 
\textbf{Epidemic Threshold Set of Network1}
\end{minipage}
\hfill
\begin{minipage}[t]{0.49\textwidth}
\centering
\resizebox{\linewidth}{!}{%
\begin{tabular}{|c|c|c|c|c|c|}
\hline
$\mathrm{Rank}$ & $C_g$ & $|C_g|$ & $|E_g|$ & $\tau_g$ & $\tau_g-\tau$ \\
\hline
1 & $C_{7}$ & 43 & 72 & 0.2008 & 0.0423   \\
2 & $C_{2}$ & 73 & 125 & 0.2014 & 0.0429   \\
3 & $C_{3}$ & 99 & 179 & 0.2101 & 0.0516   \\
4 & $C_{4}$ & 62 & 97 & 0.2167 & 0.0582  \\
5 & $C_{5}$ & 17 & 22 & 0.2621 & 0.1036  \\
6 & $C_{0}$ & 20 & 23 & 0.2814 & 0.1228  \\
7 & $C_{1}$ & 36 & 39 & 0.3149 & 0.1564   \\
8 & $C_{6}$ & 19 & 20 & 0.3908 & 0.2322   \\
 &  &   &    &  &  \\
 &  &   &    &  & \\
\hline
\end{tabular}%
}
\vspace{0.25cm}
\textbf{Epidemic Threshold Set of Haslemere}
\end{minipage}
\caption{Epidemic Threshold Set of Networks}
\end{table}

Consistent with Theorem \ref{1vsall}, all local epidemic thresholds $\tau_g$ are greater than the global epidemic threshold $\tau$. While these local thresholds differ minimally in synthetic Network 1, they exhibit a distinct split in the Haslemere network: the first four communities have closely aligned thresholds, whereas the remaining four vary widely and deviate further from $\tau$.

\subsection{Simulating the SIS Model on the Network}
In our second experiment, we simulate the discrete-time SIS dynamics \eqref{main-model} on two datasets \textbf{Network1} and \textbf{Haslemere}. 

Given the global infection probability vector $P(t)=(p_1(t),p_2(t),\ldots,p_n(t))^T.$ Let $p_g(t)=\left(p_i(t)\right)_{i\in C_g}$ denote the sub-vector restricted to community $C_g$. To quantify the aggregate infection quantity within community $C_g$ at time step $t$, we define the formula:
\[
Q_g(t)
=
\|p_g(t)\|_1
=
\sum_{i\in C_g}p_i(t).
\]

We simulate disease under two cases using an identical initial condition: $10$ nodes are randomly selected across the global network as initial seeds, each assigned an initial infection probability chosen randomly from $[0.87, 0.98]$.

\textbf{Case 1: }The effective spreading rate falls below the global epidemic threshold.

We fix the recovery rate at $\delta=0.4$ and choose the transmission rate $\beta$ such that $\frac{\beta}{\delta} = 0.95\,\tau.$

For \textbf{Network1}, we set $\frac{\beta}{\delta}=0.074$. For the \textbf{Haslemere} network, we set $\frac{\beta}{\delta}=0.15$.

\begin{figure}[H]
\centering

\begin{minipage}[t]{0.48\textwidth}
\centering
\includegraphics[width=\linewidth]{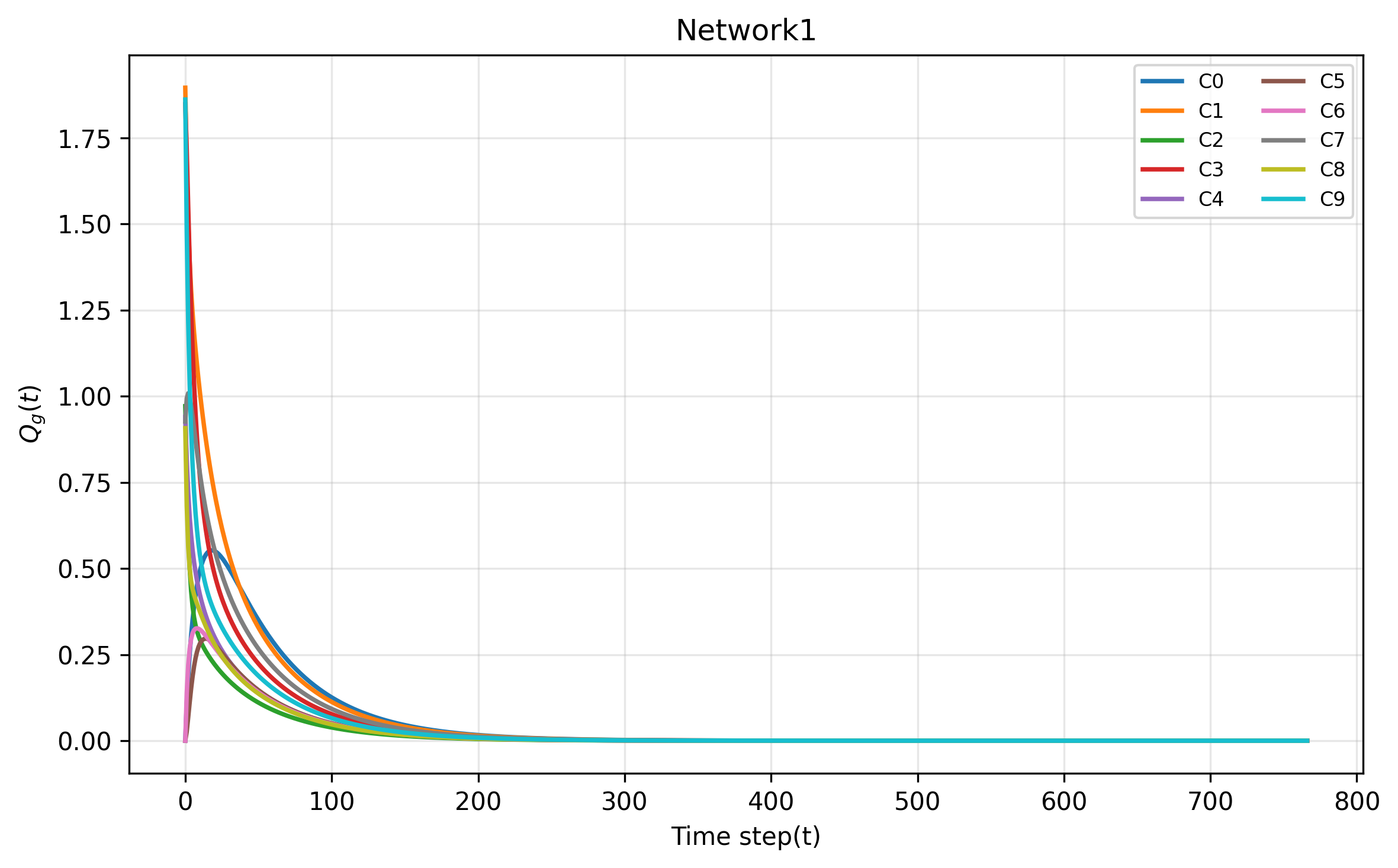}
\end{minipage}
\hfill
\begin{minipage}[t]{0.48\textwidth}
\centering
\includegraphics[width=\linewidth]{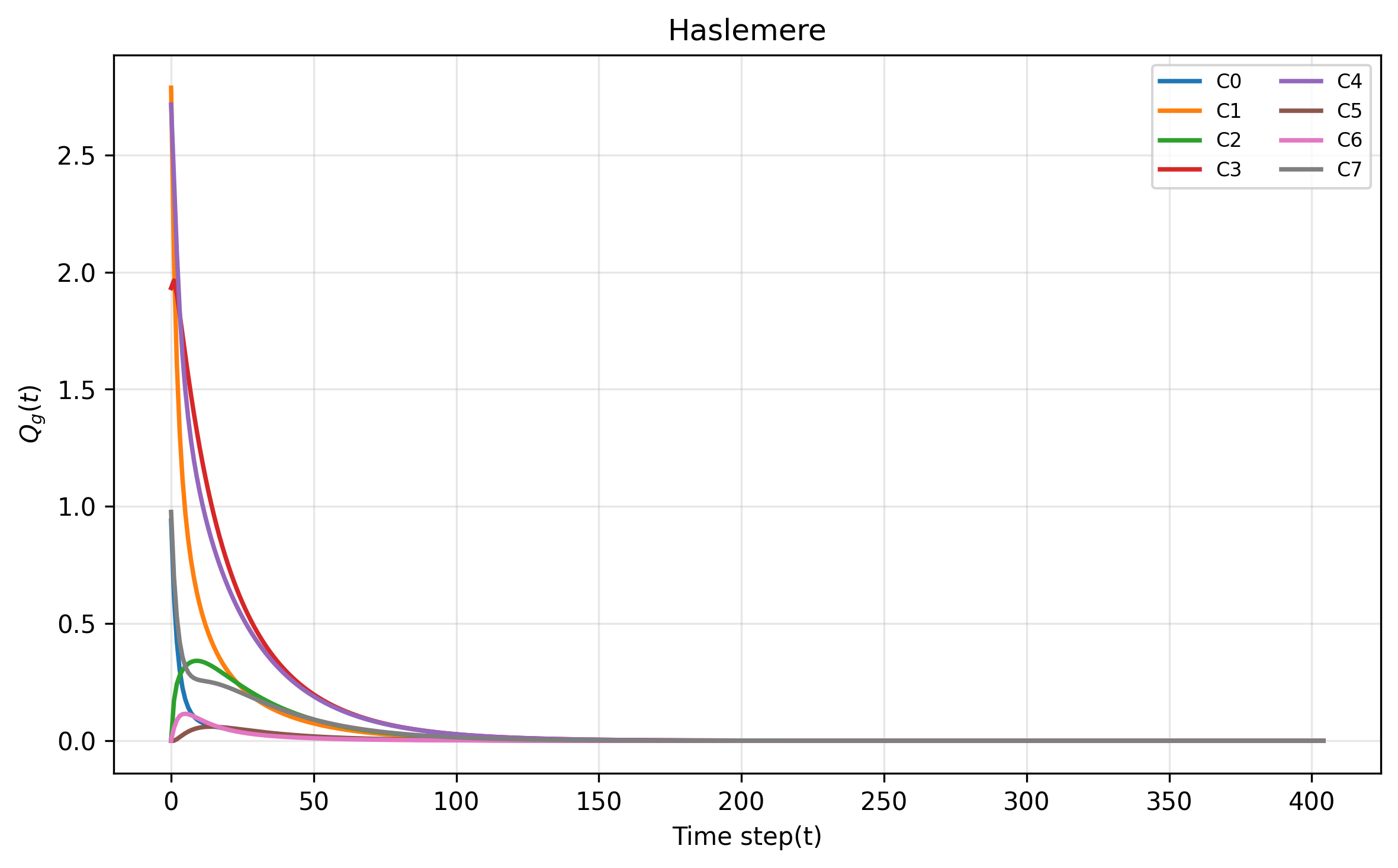}
\end{minipage}
\caption{SIS simulation and local community spreading trends for \textbf{Case 1}.}
\label{fig:Mô phỏng SIS}
\end{figure}

In this case, the disease fails to break out in any community, exactly as shown in Corollary \ref{hequa1}. The aggregate infection probabilities $Q_g(t)$ of all communities decay toward the disease-free equilibrium $0$.

\textbf{Case 2: }The effective spreading rate exceeds the local epidemic thresholds of several communities while falling below those of the remaining ones.

We fix the recovery rate at $\delta=0.4$ and choose $\beta$ such that $\frac{\beta}{\delta} = \frac{\tau_{\min}+\tau_{\max}}{2}$. Here, $\tau_{\min}$ and $\tau_{\max}$ denote the minimum and maximum local epidemic thresholds in the network, respectively.

For \textbf{Network1}, we set $\frac{\beta}{\delta}=0.12$. For the \textbf{Haslemere} network, we set $\frac{\beta}{\delta}=0.296$.

\begin{figure}[H]
\centering

\begin{minipage}[t]{0.48\textwidth}
\centering
\includegraphics[width=\linewidth]{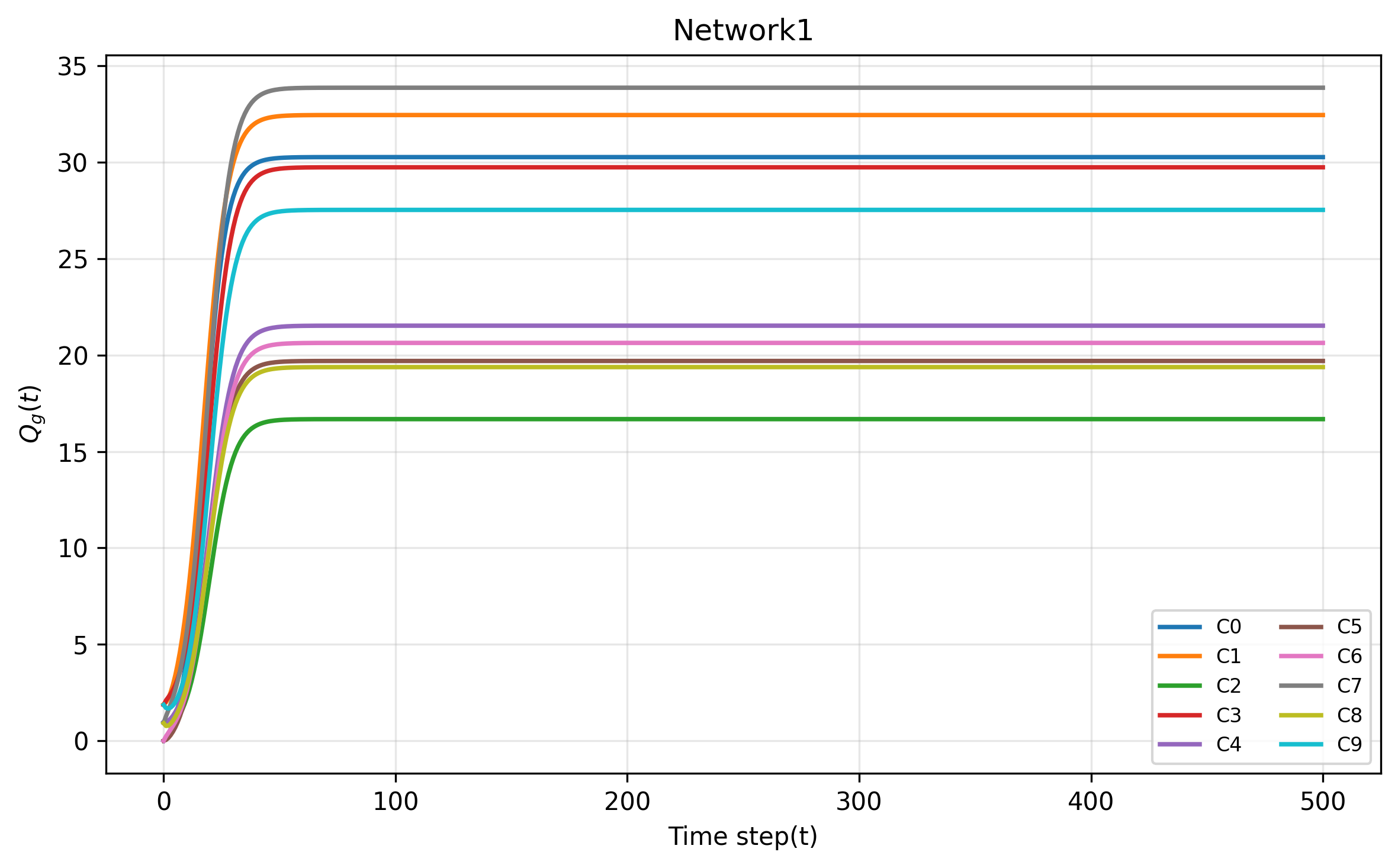}
\end{minipage}
\hfill
\begin{minipage}[t]{0.48\textwidth}
\centering
\includegraphics[width=\linewidth]{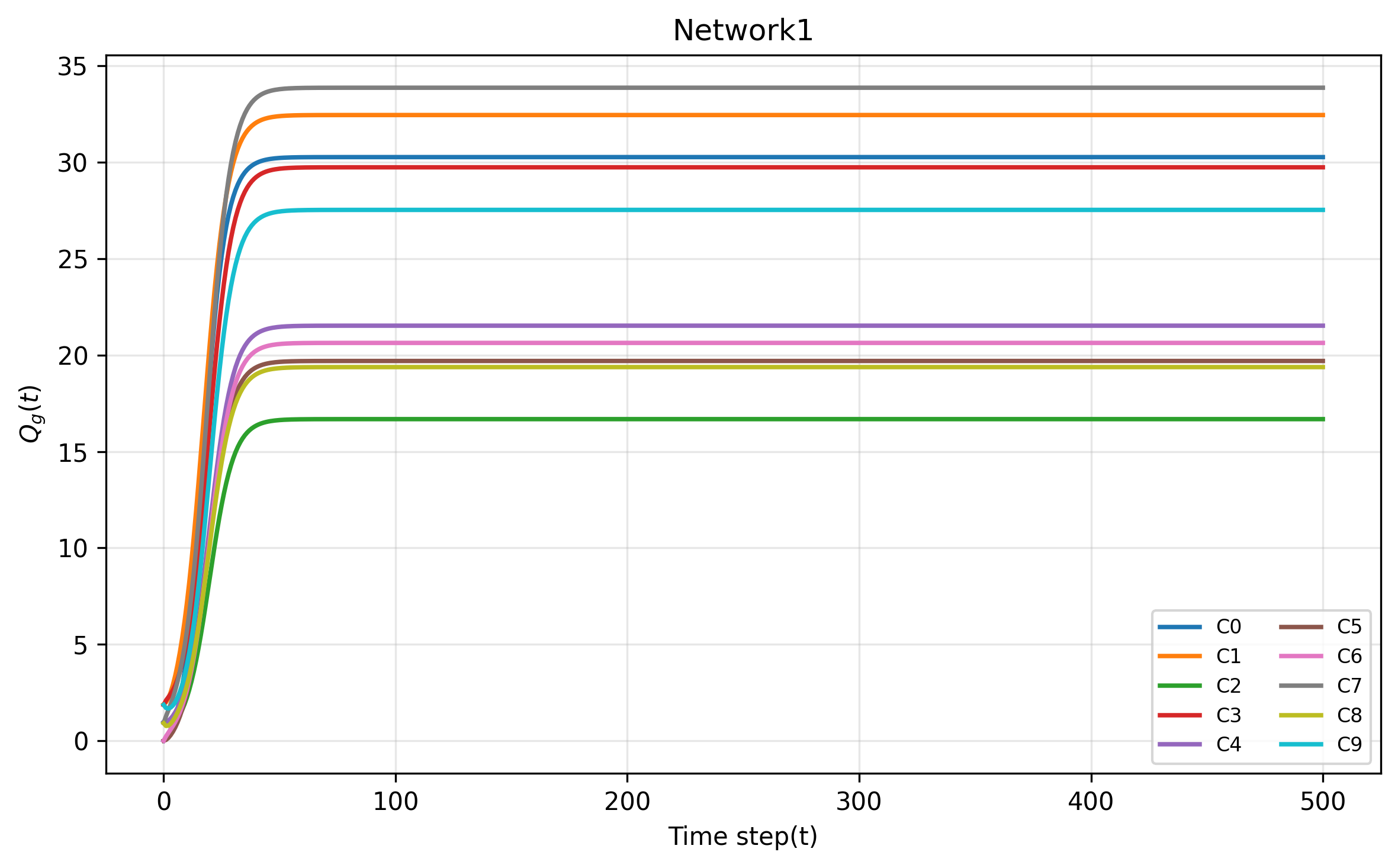}
\end{minipage}
\caption{SIS simulation and local community spreading trends for \textbf{Case 2}.}
\label{fig:Mô phỏng SIS trên ngưỡng}
\end{figure}

In this case, the selected $\beta$ and $\delta$ result in an effective spreading rate that exceeds the local epidemic thresholds of several communities while remaining below those of the remaining ones. Consistent with Corollary \ref{hequa2}, the epidemic breaks out across the entire network. The aggregate infection levels $Q_g(t)$ of all communities converge toward a non-zero equilibrium state.
\begin{remark}
Due to random seeding, initial infection levels $Q_g(0)$ vary, leaving some communities temporarily uninfected ($Q_g(0) = 0$). Within a few steps, however, all modules reach $Q_g(t) > 0$. This transition confirms that the infection inevitably spreads from seeded to unseeded communities via boundary edges.
\end{remark}

\section{Conclusion}\label{conclude}
In summary, this study established a community-level epidemic threshold set for discrete-time SIS dynamics, supported by numerical and simulation analyses. In future work, we plan to generalize this framework to alternative epidemic models and broader network topologies. Furthermore, this approach offers a scalable framework for investigating local propagation dynamics in massive networks by decomposing them into modular structures, which may inform future community-based risk assessment strategies.



\end{document}